\newtheorem{thm}{Theorem}[section]
\providecommand{\keywords}[1]{\textbf{\textit{Keywords:}} #1}
\newcommand{\ty}{t}
\newcommand{\tx}{s}
\newcommand{\mdim}{d}
\newcommand{\vx}{u}
\newcommand{\vy}{v}
\newcommand{\parav}{\xi}
\newcommand{\hpi}{h_\pi}
\newcommand{\dd}{\mathrm{d}}
\title{Predictive densities for multivariate normal models \\based on extended models and shrinkage Bayes methods}
\date{}
\author[1]{Michiko Okudo}
\author[1,2]{Fumiyasu Komaki}
\affil[1]{\small Department of Mathematical Informatics

Graduate School of Information Science and Technology

The University of Tokyo

7-3-1 Hongo, Bunkyo-ku, Tokyo 113-8656, JAPAN}
\affil[2]{RIKEN Center for Brain Science

2-1 Hirosawa, Wako City, Saitama 351-0198, JAPAN}
\begin{document}
\maketitle

\begin{center}
Abstract    
\end{center}

We investigate predictive densities for multivariate normal models with unknown mean vectors
and known covariance matrices.
Bayesian predictive densities based on shrinkage priors often have complex representations,
although they are effective in various problems.
We consider extended normal models
with mean vectors and covariance matrices as parameters,
and adopt predictive densities that belong to the extended models including the original normal model.
We adopt predictive densities that are optimal with respect to the posterior Bayes risk in the extended models.
The proposed predictive density based on a superharmonic shrinkage prior is shown to dominate the Bayesian predictive density
based on the uniform prior under a loss function based on the Kullback--Leibler divergence.
Our method provides an alternative to the empirical Bayes method, which is  widely used to construct tractable predictive densities.

\vspace{0.5cm}

\keywords{Bayes extended estimator, empirical Bayes, extended plug-in density, Stein's prior}

\section{Introduction}
Suppose that we have independent observations $x_1, \dots, x_n$ from a $\mdim$-dimensional multivariate normal model
$\mathrm{N}_\mdim(\mu, I_\mdim)$,~$\mu\in\mathbb{R}^\mdim$.
By sufficiency reduction, it is sufficient to consider the setting in which we have a single observation $x$
distributed according to $\mathrm{N}_\mdim(\mu, \vx I_\mdim)$, 
where $\vx > 0$ is known and fixed.
We address the problem of predicting a future outcome $y$
following a $\mdim$-dimensional multivariate normal distribution
$\mathrm{N}_\mdim(\mu, \vy I_\mdim),~~\mu\in\mathbb{R}^\mdim$, $\vy > 0$
with the same mean vector $\mu$ by using a predictive density $\hat{p}(y \mid x)$ that depends on $x$.
The variance $\vy$ is known and possibly differs from $\vx$.
The performance of a predictive density $\hat{p}(y\mid x)$ is evaluated by the Kullback--Leibler divergence
\[
D\{p(y;\mu, \vy I_\mdim);\hat{p}(y\mid x) \} = \int p(y;\mu, \vy I_\mdim)
\log \frac{p(y;\mu, \vy I_\mdim)}{\hat{p}(y\mid x)}dy,
\]
where $p(y;\mu, \Sigma)~(\mu\in\mathbb{R}^\mdim, \Sigma\in\mathbb{R}^{\mdim \times \mdim})$ is the density of $\mathrm{N}_\mdim(\mu, \Sigma)$.

There are two widely used methods to construct predictive densities: Bayesian predictive densities and plug-in densities.
Bayesian predictive densities are expressed as
\[
p_\pi(y \mid x) = \int p(y;\mu, \vy I_\mdim) p_\pi(\mu \mid x) d\mu,
\]
where $p_\pi(\mu \mid x)$ is the posterior density
\[
p_\pi(\mu \mid x) = \frac{p(x;\mu, \vy I_\mdim)\pi(\mu)}{\int p(x;\mu, \vy I_\mdim)\pi(\mu) d\mu}
\]
based on a prior density $\pi(\mu)$.
Bayesian predictive densities do not belong to a tractable finite-dimensional family
unless a conjugate prior is adopted.
On the other hand, plug-in predictive densities can easily be obtained by plugging
an estimator $\hat{\mu}$ such as maximum likelihood estimators or Bayes estimators, in the unknown parameter $\mu$
of the density $p(y;\mu,vI_d)$ of $y$.
However, Bayesian predictive densities are preferable to plug-in densities in many examples.

Shrinkage methods are effective both in estimation and in prediction for normal models with unknown mean vectors $\mu$.
Bayes estimators based on Stein's prior $\pi_\mathrm{S}(\mu) \propto \|\mu \|^{-(d-2)}$ dominates the maximum likelihood estimator $\hat{\mu}_\mathrm{mle}=x$ when $\mdim \geq 3$ \citep{stein1974}.
Priors that ``shrink'' posterior density to a certain point such as the origin or to a subspace, are called shrinkage priors.
If a function $\pi(\mu)$ satisfies the inequality
\[
\Delta \pi(\mu) := \sum_{i=1}^\mdim \frac{\partial^2}{\partial \mu_i ^2} \pi(\mu) \leq 0,
\]
then $\pi(\mu)$ is said to be superharmonic.
Bayes estimators based on nonconstant superharmonic priors dominate the maximum likelihood estimator \citep{stein1974}.
The density $\pi_\mathrm{S}$ shrinks the posterior to the origin and
satisfies
\[
\Delta \pi_\mathrm{S}(\mu) = - \delta(\mu),
\]
where $\delta$ denotes the Dirac delta function,
in the framework of Schwartz's distribution theory, see e.g.\ \citet{john1978} p.~74.
In this sense, $\pi_\mathrm{S}$ is a superharmonic function.
The maximum likelihood estimator coincides with the Bayes estimator based on the uniform prior $\pi_\mathrm{U}(\mu)\propto 1.$

A parallel result regarding Bayesian prediction is obtained by \citet{komaki2001},
and the Bayesian predictive density based on Stein's prior dominates the Bayesian predictive density based on the uniform prior.
Bayesian predictive densities based on superharmonic priors dominate the Bayesian predictive density based on $\pi_\mathrm{U}$
\citep{george2006}.
Other important shrinkage priors for multivariate normal models with unknown mean include shrinkage priors for regression problems \citep{george2008, kobayashi2008} and singular value shrinkage priors for matrix-variate normal models \citep{matsuda2015}.
The Bayesian predictive density based on $\pi_\mathrm{U}$ has the simple form $\mathrm{N}_d(x, (\vx +\vy )I_d)$.
On the other hand, Bayesian predictive densities based on shrinkage priors generally do not have such simple forms.

The empirical Bayes method is another method of constructing predictive densities with reasonable risk performance and small computational cost.
An empirical Bayes method for approximating a Bayesian predictive density based on Stein's prior is studied by \citet{xu2011}.
Stein's prior is represented as a mixture of normal distributions:
\begin{align}
\pi_\mathrm{S}(\mu) &\propto \|\mu \|^{-(d-2)}
= \frac{2}{\Gamma(d/2 - 1)}\int^\infty_0 (2 \tau)^{-d/2}\exp\left( -\frac{\|\mu\|^2}{2 \tau} \right)d \tau,
\label{integral representation}
\end{align}
where $\Gamma(d/2 - 1)$ denotes the Gamma function.
The representation \eqref{integral representation} is used to construct the Bayesian predictive density based
on Stein's prior in \citet{komaki2001}.
In \citet{xu2011}, Bayesian predictive densities based on a prior
\begin{align*}
\pi(\mu ; \hat{\tau}(x))
= (2\pi\hat{\tau}(x))^{-\mdim/2} \exp \left( -\frac{\|\mu\|^2}{2\hat{\tau}(x)} \right)
\end{align*}
with an estimator $\hat{\tau}(x)$ are constructed.
The predictive density based on the empirical Bayes method is expressed as
\begin{align}
\int p(y ; \mu, \vy I_d) \pi(\mu ; \hat{\tau}(x)) d \mu.
\label{eB}
\end{align}
Therefore, the empirical Bayes method is regarded as an approximation of the full Bayes method
in which a prior is adopted for the hyperparameter $\tau$.
The predictive density \eqref{eB} that is obtained by the empirical Bayes method
is also a normal distribution.

The computational difference between full Bayes and empirical Bayes
lies in tha fact that empirical Bayes methods requires only one plug-in distribution to compute the predictive density.
Approximating $p_\pi(y\mid x)$ by empirical Bayes saves computational cost and
it is effective when predicting densities for many future samples and when $d$ is large.

We present an alternative to the empirical Bayes method
to construct tractable predictive densities based on shrinkage priors.
We consider an ``extended'' model including the original model
$\mathrm{N}_d(\mu, \vy I_d)$ with the fixed $\vy$.
Normal models such as $\mathrm{N}_d(\mu, \parav I_d)~(\parav> 0)$
and $\mathrm{N}_d(\mu, \Sigma)~(\Sigma\in\mathbb{R}^{\mdim \times \mdim})$
are adopted as the extended models.
We denote the predictive densities in those extended models as extended plug-in densities.
The resulting predictive densities are optimal with respect to the posterior Bayes risk in the extended models.
Our method is based on a combination of extended plug-in densities for curved exponential families
\citep{okudo2021} and shrinkage priors.
We can construct predictive densities not only in the normal model $\mathrm{N}_d(\mu, \parav I_d)~ (\parav > 0)$
like empirical Bayes method in \citet{xu2011}, but also in the larger normal model $\mathrm{N}_d(\mu, \Sigma)~ (\Sigma\in\mathbb{R}^{\mdim\times \mdim}, \Sigma \succ 0)$.
This approach could apply to various models besides the normal models that can be embedded
in larger exponential families.

We show that the Kullback--Leibler risk difference of an extended plug-in predictive density
based on a prior and the Bayesian predictive density based on the uniform prior
reduces to the Kullback--Leibler risk difference of the corresponding Bayes estimators
in the limit $1/v \to 0$.

Thus, our
predictive density dominates the Bayesian predictive density based on the uniform prior if the performance of the predictive densities is evaluated in the limit
$1/v \rightarrow 0$.
The numerical simulations suggest that the proposed predictive density 
performs better than the Bayesian predictive density based on the uniform prior even if $1/v$ is not close to $0$.

%
%
\section{Bayes extended estimators}
\subsection{Extended models and estimators}
We investigate extended plug-in densities in extended models as predictive densities.
We consider two extended models:
\begin{align*}
\mathcal{E}_1 &= \{\mathrm{N}_d(\mu, \parav I_d)\mid \mu\in\mathbb{R}^d,\ \parav > 0 \},
\end{align*}
and
\begin{align*}
\mathcal{E}_2 &= \{\mathrm{N}_d(\mu, \Sigma)\mid \mu\in\mathbb{R}^d, \Sigma\in\mathbb{R}^{\mdim \times \mdim}, \Sigma \succ 0 \}
\end{align*}
that includes the original model $\mathcal{P} := \mathrm{N}_d(\mu, \vy I_d)$ with the known $\vy$.
In the first extended model $\mathcal{E}_1$, the variance $\parav$ is a parameter
in contrast that variance $\vy$ is fixed in $\mathcal{P}$.

The second extended model $\mathcal{E}_2$ allows all positive semidefinite covariance matrices $\Sigma$.
The inclusion relation is $\mathcal{P} \subseteq \mathcal{E}_1 \subseteq \mathcal{E}_2.$
Other extended models such as $\{\mathrm{N}_d(\mu, D)\mid \mu\in\mathbb{R}^d,
D:\mdim \mbox{-dimensional diagonal matrix}\}$ can be considered in the same manner.

Although the original model $\mathcal{P}$ is a full exponential family,
it can be formulated as a curved exponential family that is  embedded in the extended models
$\mathcal{E}_1$ or $\mathcal{E}_2$.
Thus, we can choose a predictive density that belongs to $\mathcal{E}_1$ or $\mathcal{E}_2$
instead of the original model $\mathcal{P}$.
For a density function
\begin{align*}
p(y;\theta) =b(y)\exp(s(y)^\top \theta -\psi(\theta))
\end{align*}
of an exponential family $\mathcal{E}$, the expectation parameter is
\begin{align*}
\eta(\theta) = \mathrm{E}[s(y)].
\end{align*}
If the original model $\mathcal{P}$ is a curved exponential family
\begin{align*}
p(y;\omega) =b(y)\exp(s(y)^\top \theta(\omega) -\psi(\theta(\omega)))
\end{align*}
that is embedded in $\mathcal{E}$, the Bayes extended estimator $\eta(\omega)$ is
the posterior mean of $\eta(\omega)$ that minimizes the posterior Bayes risk of $p(y;\hat{\eta})$.
Thus, it is reasonable to consider the extended plug-in densities  that belong to $\mathcal{E}$
and not to $\mathcal{P}$ \citep{okudo2021} for prediction.
In other words, extended plug-in densities with the posterior mean of $\eta$ are 
the closest to the Bayesian predictive densities with respect to the posterior Bayes risk.
We denote the posterior means of $\eta$ based on a prior $\pi$ as the Bayes extended estimator and write it as $\hat{\eta}_\pi$.

\begin{table}[htbp]
  \caption{Extended plug-in distributions and extended Bayes estimators
  with respect to a prior density $\pi$.}
  \label{tb:eplugin}
  \centering
  \begin{tabular}{c|c|c|c}
   \hline 
Extended & Expectation & Bayes & Extended \\[-1mm]
model & parameters & extended estimators & plug-in distribution \\ \hline
$\mathcal{E}_1$: $\mathrm{N}_d(\mu, \parav I_d)$ & $(\mu, d \parav + \mu^\top \mu)$ &
\begin{tabular}{l}
$\hat{\mu}_\pi = \mathrm{E}_\pi[\mu \mid x],$ \\
$\hat{\parav}_\pi = \vy + (\mathrm{E}_\pi[\mu^\top \mu \mid x] -\hat{\mu}_\pi^\top \hat{\mu}_\pi)/d$
\end{tabular}
& $\mathrm{N}_d(\hat{\mu}_\pi, \hat{\parav}_\pi I_d)$ \\ \hline
$\mathcal{E}_2$: $\mathrm{N}_d(\mu, \Sigma)$ & $(\mu, \Sigma + \mu \mu^\top)$ &
\begin{tabular}{l}
$\hat{\mu}_\pi = \mathrm{E}_\pi[\mu\mid x]$ \\
$\hat{\Sigma}_\pi = \vy I_d + \mathrm{E}_\pi[\mu \mu^\top \mid x] -\hat{\mu}_\pi \hat{\mu}_\pi^\top$
\end{tabular}
& $\mathrm{N}_d(\hat{\mu}_\pi, \hat{\Sigma}_{\pi})$ \\
\hline
  \end{tabular}
\end{table}

We obtain the expectation parameters of the extended models $\mathcal{E}_1$ and $\mathcal{E}_2$.
Bayes extended estimators are their posterior means.
The results are shown in Table \ref{tb:eplugin}.

A density function in the extended model $\mathcal{E}_1 = \mathrm{N}_d (\mu, \parav I_d)$ is
\begin{align*}
p(y;\mu,\parav I_d) &= (2\pi \parav)^{-d/2} \exp \{ -(y-\mu)^\top (y-\mu)/{2 \parav} \}\\
&= (2\pi \parav)^{-d/2} \exp\{(\mu/\parav)^\top y - (2 \parav)^{-1}y^\top y -\mu^\top \mu/(2 \parav)\}.
\end{align*}
Thus, the expectation parameters are
\[
\eta = (\mathrm{E}[y], \mathrm{E}[y^\top y]) = (\mu, d \parav + \mu^\top \mu).
\]
The expectation parameter for a density in $\mathcal{P} \subset \mathcal{E}_1$ 
is $\eta = (\mu, d v + \mu^\top \mu)$, where $\vy$ is known and fixed.
Thus, the posterior mean of $\eta$ is
\begin{align}
\hat{\eta}_\pi = (\mathrm{E}_\pi[\mu\mid x], ~d\vy + \mathrm{E}_\pi[\mu^\top \mu\mid x]),
\label{etapostmean}
\end{align}
where
$\mathrm{E}_\pi[\cdot \mid x]$ denotes the expectation
with respect to the posterior density of $\mu$ based on a prior $\pi$.
Although the prior and posterior densities are probability densities on $\mathcal{P}$,
an extended plug-in distribution with the posterior mean $\mathrm{E}_\pi[\eta \mid x]$ does not belong to $\mathcal{P}$, which consequently has a favourable effect on the predictive performance.

By plugging \eqref{etapostmean} into $\eta = (\mu, d \parav+ \mu^\top \mu)$,
we obtain the extended plug-in density
$\mathrm{N}_d(\hat{\mu}_\pi, \hat{\parav}_\pi I_d)$,
with respect to $\mathcal{E}_1$,
where 
$\hat{\mu}_\pi = \mathrm{E}_\pi[\mu\mid x]$
and
$
\hat{\parav}_\pi = \vy + (\mathrm{E}_\pi[\mu^\top \mu \mid x] -\hat{\mu}_\pi^\top \hat{\mu}_\pi)/d.
$

Similarly, the expectation parameter of the second extended model $\mathcal{E}_2$ is
\[
\eta = (\mathrm{E}[y], \mathrm{E}[y y^\top]) = (\mu, \Sigma + \mu \mu^\top).
\]
Thus, the extended plug-in density with the posterior mean $\hat{\eta}_\pi$ based on $\mathcal{E}_2$ is
$\mathrm{N}_d(\hat{\mu}_\pi, \hat{\Sigma}_{\pi})$,
where
$\hat{\Sigma}_\pi = \vy I_d + \mathrm{E}_\pi[\mu \mu^\top \mid x] -\hat{\mu}_\pi \hat{\mu}_\pi^\top$.

We obtain the extended plug-in densities with respect to the uniform prior
$\pi_\mathrm{U}(\mu)=1$.
As the posterior density with respect to $\pi_\mathrm{U}$ is
\[
p_\mathrm{U}(\mu\mid x) = p(x; \mu, \vx  I_d) \pi_\mathrm{U}(\mu),
\]
we obtain
\[
\hat{\mu}_\mathrm{U} = \mathrm{E}_{\pi_\mathrm{U}}[\mu\mid x] = x
\]
and
\[
\hat{\parav}_\mathrm{U} = \vy + \mathrm{E}_{\pi_\mathrm{U}}[\mu^\top \mu \mid x]/\mdim - x^\top x/ \mdim
= \vx  + \vy.
\]
Thus, the extended plug-in distribution
$\mathrm{N}_d(\hat{\mu}_\mathrm{U}, \hat{\parav}_\mathrm{U} I_d)
= \mathrm{N}_d(\hat{\mu}_\mathrm{U}, (u+v) I_d)$
based on $\mathcal{E}_1$ is identical to the Bayesian predictive density
$\hat{p}_\mathrm{U}$ based on $\pi_\mathrm{U}$.
As it is optimal with respect to the posterior Bayes risk among all distributions and included in $\mathcal{E}_1$,
the extended plug-in density based on $\mathcal{E}_2$ is also identical to
$\mathrm{N}_d(\hat{\mu}_\mathrm{U}, \hat{\parav}_\mathrm{U} I_d)$.

We investigate extended plug-in densities based on shrinkage priors
including Stein's prior $\pi_\mathrm{S} \propto \|\mu \|^{-(d-2)}$.
Although Bayesian predictive densities based on shrinkage priors do not belong to normal models,
extended plug-in densities with Bayesian extended estimators based on shrinkage priors
belong to tractable extended models. 

\subsection{Posterior mean representations}
We evaluate posterior means that were described in the previous subsection.
Let
\[
m_\pi(x) = \int p(x;{\mu},\vx I_d)\pi({\mu})d{\mu},
\]
which is the marginal density of $x$.
The derivatives of model density functions are given by
\begin{align*}
\nabla p(x;\mu,\vx ) &= \frac{1}{\vx }(\mu - x)p(x;\mu,\vx I_d),\\
\nabla^2 p(x;\mu,\vx) &=  - \frac{1}{\vx}p(x;\mu,\vx I_d) I_d
+ \frac{1}{\vx^2}(\mu - x)(\mu - x)^\top p(x;\mu,\vx I_d), \\
\shortintertext{and}
\Delta p(x;\mu,\vx) &= -\frac{\mdim}{\vx}p(x;\mu,\vx I_d) + \frac{1}{\vx^2}(\mu-x)^\top (\mu-x)p(x;\mu,\vx I_d),
\end{align*}
where, for a function $f:\mathbb{R}^d \to \mathbb{R}$,
\begin{align*}
\nabla f(x) &:= \left( \frac{\partial f}{\partial x_1}(x), \dots, \frac{\partial f}{\partial x_d}(x)\right)^\top,~~~
\Delta f(x) := \sum_{i=1}^d \frac{\partial^2 f}{\partial x_i^2}(x)
\end{align*}
and $\nabla^2 f$ is the Hessian matrix whose $(i,j)$ element is
\begin{align*}
(\nabla^2 f(x))_{ij} &:= \frac{\partial^2 f(x)}{\partial x_i \partial x_j}.
\end{align*}

The posterior mean of $\mu$ is evaluated in \citet{brown1971} as
\begin{align}
\hat{\mu}_\pi &= \int\mu \frac{p(x;\mu,\vx I_d)\pi(\mu)}
{\int p(x;\tilde{\mu},\vx I_d)\pi(\tilde{\mu})d\tilde{\mu}}d\mu 
= x + \int(\mu-x)\frac{p(x;\mu,\vx I_d)\pi(\mu)}
{\int p(x;\tilde{\mu},\vx I_d)\pi(\tilde{\mu})d\tilde{\mu}}d\mu 
= x + \vx\nabla \log m_\pi.
\label{pm1}
\end{align}
The posterior mean of $(\mu-x)(\mu-x)^\top$ is expressed as
\begin{align}
\int& (\mu-x)(\mu-x)^\top\frac{p(x;\mu,\vx I_d)\pi(\mu)}
{\int p(x;\tilde{\mu},\vx I_d)\pi(\tilde{\mu})d\tilde{\mu}}d\mu \nonumber \\
&= \int \{ \vx^2 \nabla^2 p(x;\mu,\vx I_d) + \vx p(x;\mu,\vx I_d) I_d \}
\frac{\pi(\mu)}{\int p(x;\tilde{\mu},\vx I_d)\pi(\tilde{\mu})d\tilde{\mu}}d\mu \nonumber\\
&= \vx^2 \frac{\nabla^2 m_\pi}{m_\pi} + \vx I_d.
\label{pm2}
\end{align}
Thus, the posterior mean of $(\mu-x)^\top(\mu-x)$ is
\begin{align}
&\int (\mu-x)^\top(\mu-x)\frac{p(x;\mu,\vx I_d)\pi(\mu)}
{\int p(x;\tilde{\mu},\vx I_d)\pi(\tilde{\mu})d\tilde{\mu}}d\mu 
= \vx^2 \frac{\Delta m_\pi}{m_\pi} + \mdim \vx.
\label{pm3}
\end{align}

From \eqref{pm1}, \eqref{pm2}, and \eqref{pm3},
the estimators $\hat{\parav}_\pi$ and $\hat{\Sigma}_{\pi}$ are given by
\begin{align}
\hat{\parav}_\pi &=\vy + (\mathrm{E}_\pi[\mu^\top \mu \mid x]
-\hat{\mu}_\pi^\top\hat{\mu}_\pi)/d \nonumber\\
&= \vy + \mathrm{E}_\pi[(\mu-x)^\top (\mu-x) \mid x]/d - (\hat{\mu}_\pi-x)^\top (\hat{\mu}_\pi-x)/d \nonumber\\
&= \vx + \vy + \hpi(x),
\label{hpi}
\end{align}
where
\[
\hpi(x) \coloneqq \frac{\vx^2}{\mdim}\frac{\Delta m_\pi}{m_\pi} - \frac{\vx^2}{\mdim}\frac{\|\nabla m_\pi\|^2}{m_\pi^2},
\]
and
\begin{align}
\hat{\Sigma}_\pi
&= \vy I_d + \mathrm{E}_\pi[\mu \mu^\top \mid x] -\hat{\mu}_\pi^\top \hat{\mu}_\pi \nonumber\\
&= \vy I_d + \mathrm{E}_\pi[(\mu-x) (\mu-x)^\top \mid x] -(\hat{\mu}_\pi-x) (\hat{\mu}_\pi-x)^\top \nonumber\\
&= (\vx + \vy) I_d + H_\pi(x),
\label{Sigmapi}
\end{align}
where
\[
H_\pi(x) \coloneqq \vx^2\left( \frac{\nabla^2 m_\pi}{m_\pi} - \frac{\nabla m_\pi \nabla m_\pi^\top}{m_\pi^2}\right).
\]

Note that $\hat{\parav}_\pi$ is greater than the model variance $\vy$.
If $\pi$ is superharmonic, $\hat{\parav}_\pi$ is smaller than $\vx +\vy$,
which is the variance of the Bayesian predictive density based on the uniform prior.
It can be shown that $\Delta  m_\pi \leq 0$ holds if $\Delta \pi \leq 0$ as follows.
We have
\begin{align*}
\frac{\partial}{\partial x_i}
m_\pi(x)
&=
\int \frac{\mu-x_i}{\vx}p(x;\mu, \vx I_d)\pi(\mu) d\mu
=
\int p(x;\mu, \vx I_d) \frac{\partial}{\partial \mu_i}\pi(\mu) d\mu.
\end{align*}
The last equation comes from Stein's lemma.
Thus,
\begin{align*}
\frac{\partial^2}{\partial x_i^2}
m_\pi(x)
&=
\int p(x;\mu, \vx I_d) \frac{\partial^2}{\partial \mu_i^2}\pi(\mu) d\mu
\end{align*}
and we obtain
\begin{align*}
\Delta m_\pi(x)
&=
\int p(x;\mu, \vx I_d) \Delta\pi(\mu) d\mu.
\end{align*}
Therefore, when $\pi$ is a superharmonic function, $m_\pi$ is also superharmonic and
\begin{align*}
\hat{\parav}_\pi(x) = \vx + \vy + \hpi(x) \leq \vx + \vy.
\end{align*}
On the other hand, because 
\begin{align*}
\hat{\parav}_\pi &= \vy + (\mathrm{E}[\mu^\top \mu \mid x]-\hat{\mu}_\pi^\top\hat{\mu}_\pi)/ \mdim
= \vy + \mathrm{E}[(\mu - \hat{\mu}_\pi)^\top (\mu-\hat{\mu}_\pi) \mid x] / \mdim,
\end{align*}
we have $\hat{\parav}_\pi(x) \geq \vy$.
Because
$\mathrm{tr} \hat{\Sigma}_{\pi} = \mdim \hat{\parav}_\pi$,
the average of the eigenvalues of $\hat{\Sigma}_{\pi}$ is also smaller than the variance $u+v$
of the Bayesian predictive density based on the uniform prior.

%
%

\section{Risk for infinitesimal prediction}
We compare the Kullback--Leibler risk of the extended plug-in densities with
Bayes extended esitmators
and that of the Bayesian predictive density $p_\mathrm{U}(y \mid x)$ based on the uniform prior.
The Bayesian predictive density
is included in the normal model $\mathrm{N}_d(\mu, \parav I_d)~(\parav\in \mathbb{R})$ and it is minimax.
It is desirable to obtain predictive densities belonging to the extended models
that perform better than $p_\mathrm{U}(y \mid x)$.

The risk function of $\hat{p}(y \mid x)$ is
\[
R(\mu;\hat{p}) = {\rm E}[D\{p(y;\mu, \vy I_d);\hat{p}(y\mid x) \}]=\int p(x;\mu, \vx I_d)D\{p(y;\mu, \vy I_d);\hat{p}(y\mid x) \}dx.
\]
For the predictive densities $\hat{p}_1(y\mid x)$ and $\hat{p}_2(y\mid x)$, we have
\begin{align*}
D\{p(y;\mu, \vy I_d);\hat{p}_1(y\mid x) \} - D\{p(y;\mu, \vy I_d);\hat{p}_2(y\mid x) \}
&= \int p(y;\mu, \vy I_d) \log \frac{\hat{p}_2(y\mid x)}{\hat{p}_1(y\mid x)} dy.
\end{align*}

We introduce the time variables $s := 1/\vx$ and $\ty := 1/\vy$, 
which can be regarded as the numbers of observations and the number of future samples, respectively.
We consider a Gaussian process $Z_\tau$ $(\tau \geq 0)$ defined by the stochastic differential equation
\[
 \dd Z_\tau = \mu \dd \tau + \dd B_\tau ~~ (\tau \geq 0),
\]
where $Z_0 = 0$ and $B_\tau$ $(\tau \geq 0)$ is a standard Browninan motion.
Consequently, the distribution of $(1/\tau)Z_\tau$ is $\mathrm{N}(\mu,1/\tau)$.
Thus, our problem is equivalent to a problem in which
we observe $(1/s) Z_s$ and predict
$(1/t) (Z_{s + t} - Z_s)$.
Therefore, $s$ and $\ty$ correspond to the observation time and prediction time, respectively.
Let $\hat{\mu}_{t,\pi}$ be the posterior mean of $\mu$ based on observation $Z_t$ and prior $\pi$.

In this setting, the relationship 
between prediction risk and estimation risk used in \citet{brown2008}
is represented by
\begin{align}
R(\mu; p_{\mathrm{U}}) - R(\mu; p_{\pi})
= \int_{\tx}^{\tx +\ty} \frac{\mathrm{E}_{\tau}[\| \tau^{-1} Z_\tau - \mu\|^2]
-\mathrm{E}_{\tau}[\|\hat{\mu}_{\tau,\pi}-\mu\|^2]}{2} \dd \tau,
\label{riskintegration}
\end{align}
where
$\mathrm{E}_{\tau}[\cdot]$ means taking expectation
about $\mathrm{N}_d(\mu, 1/\tau)$.
This shows that the risk difference of the Bayesian predictive densities is represented as the integration of the estimation risk difference from $s$ to $\tx + \ty$.

The relation \eqref{riskintegration} shows that
\begin{align*}
R(\mu; p_\mathrm{U}) - R(\mu; p_\pi) \geq 0
\end{align*}
holds if
\[
\mathrm{E}_{v}[\|x-\mu\|^2]-\mathrm{E}_{v}[\|\hat{\mu}_\pi-\mu\|^2]\geq 0
\]
for all $\tau> 0$.
Thus, if $\pi$ is a superharmonic prior, $p_\pi$ dominates $p_\mathrm{U}$.
In this sense, estimation risk difference can be considered as infinitesimal-prediction risk.

Subsequently, we consider the relationship between the risk of extended plug-in densities and that of Bayes extended estimators.
We compare the risk functions of extended plug-in
predictive densities with Bayes extended estimators
based on superharmonic priors
and the uniform prior $\pi_\mathrm{U}$.
Recall that the extended plug-in densities $p(y;\hat{\mu}_\mathrm{U}, \hat{\parav}_\mathrm{U})$
and $p(y;\hat{\mu}_\mathrm{U}, \hat{\Sigma}_\mathrm{U})$
based on the uniform prior $\pi_\mathrm{U}$
coincide with the Bayesian predictive density
based on $\pi_\mathrm{U}$.
We show that the infinitesimal prediction risk difference of
extended plug-in predictive densities at $\tau = s$
is the risk difference between the corresponding Bayes extended estimators.
This shows that the extended plug-in distributions with $(\hat{\mu}_\pi,\hat{\parav}_{t,\pi})$ and $(\hat{\mu}_\pi,\hat{\Sigma}_{t,\pi})$,
where the subscript $t$ is added to the densities to clarify
their dependency on it,
have better performance than $\hat{p}_\mathrm{U}$
if $\ty$ is small enough and $\pi$ is a superharmonic prior.
From \eqref{pm1}, $\hat{\mu}_\pi$ does not depend on $t$.
\begin{thm} 
\label{thm:risk}
Denote the Kullback--Leibler risk of $p(y;\hat{\mu}_\pi, \hat{\parav}_{\ty,\pi}I_d)$
and $p(y;\hat{\mu}_\pi, \hat{\Sigma}_{\ty,\pi})$
as
\[
R_t(\mu;\hat{\mu}_\pi, \hat{\parav}_{\ty,\pi})
\coloneqq 
\int p(x;\mu, \tx^{-1} I_d) D\{p(y;\mu, \ty^{-1} I_d);
p(y;\hat{\mu}_\pi(x), \hat{\parav}_{t,\pi}(x) I_d) \}dx.
\]
and
\[
R_t(\mu;\hat{\mu}_\pi, \hat{\Sigma}_{\ty,\pi})
\coloneqq 
\int p(x;\mu, \tx^{-1} I_d) D\{p(y;\mu, \ty^{-1} I_d);
p(y;\hat{\mu}_\pi(x), \hat{\Sigma}_{\ty,\pi}(x)) \dd x,
\]
respectively.
Then,
\begin{align}
\lim_{\ty \to 0}
\frac{\partial}{\partial \ty}\left\{ R_t(\mu; \hat{p}_{t,\mathrm{U}})
- R_t(\mu; \hat{\mu}_{\pi},\hat{\parav}_{t,\pi})\right\}
&= \frac{\mathrm{E}[\|x-\mu\|^2]
-\mathrm{E}[\|\hat{\mu}_\pi-\mu\|^2]}{2}
\label{thm-e1}
\end{align}
and
\begin{align}
\lim_{\ty \to 0}
\frac{\partial}{\partial \ty} \left\{ R_t(\mu; \hat{p}_{t,\mathrm{U}})
- R_t(\mu; \hat{\mu}_{\pi},\hat{\Sigma}_{t,\pi})\right\}
&= \frac{\mathrm{E}[\|x-\mu\|^2]
-\mathrm{E}[\|\hat{\mu}_\pi-\mu\|^2]}{2}
\label{thm-e2}
\end{align}
hold.
\end{thm}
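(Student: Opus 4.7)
The plan is to expand the Kullback--Leibler divergence as a Taylor series in $t$ around $t=0$ and read off the first-order coefficient. I would first recall the closed-form KL divergence between two Gaussians, which for the $\mathcal{E}_2$ case reads
\[
D\{p(y;\mu, t^{-1} I_d); p(y;\hat{\mu}_\pi, \hat{\Sigma}_{t,\pi})\}
= \tfrac{1}{2}\Bigl[\log\det(t\hat{\Sigma}_{t,\pi}) + t^{-1}\mathrm{tr}\,\hat{\Sigma}_{t,\pi}^{-1} + (\hat{\mu}_\pi-\mu)^\top \hat{\Sigma}_{t,\pi}^{-1}(\hat{\mu}_\pi-\mu) - d\Bigr],
\]
and analogously in the $\mathcal{E}_1$ case with $\hat{\Sigma}_{t,\pi}=\hat{\parav}_{t,\pi} I_d$. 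The essential observation from \eqref{hpi} and \eqref{Sigmapi} is that $\hat{\parav}_{t,\pi}(x) = 1/t + \alpha_\pi(x)$ and $\hat{\Sigma}_{t,\pi}(x) = (1/t)I_d + A_\pi(x)$, where $\alpha_\pi(x)=\vx + h_\pi(x)$ and $A_\pi(x)=\vx I_d + H_\pi(x)$ do not depend on $t$.

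Next I would factor out $1/t$ and expand. Writing $\hat{\Sigma}_{t,\pi} = (1/t)(I_d + t A_\pi)$ gives $\log\det(t\hat{\Sigma}_{t,\pi}) = \mathrm{tr}\,\log(I_d + tA_\pi) = t\,\mathrm{tr}\,A_\pi - (t^2/2)\mathrm{tr}(A_\pi^2) + O(t^3)$, and $t^{-1}\mathrm{tr}\,\hat{\Sigma}_{t,\pi}^{-1} = \mathrm{tr}(I_d + tA_\pi)^{-1} = d - t\,\mathrm{tr}\,A_\pi + t^2\,\mathrm{tr}(A_\pi^2) + O(t^3)$. Adding these and subtracting $d$ gives a contribution of order $O(t^2)$, whose derivative at $t=0$ vanishes. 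The Mahalanobis term expands as $t(\hat{\mu}_\pi-\mu)^\top(I_d - tA_\pi + \cdots)(\hat{\mu}_\pi-\mu) = t\|\hat{\mu}_\pi-\mu\|^2 + O(t^2)$. Hence
\[
D\{p(y;\mu, t^{-1} I_d); p(y;\hat{\mu}_\pi, \hat{\Sigma}_{t,\pi})\} = \tfrac{t}{2}\|\hat{\mu}_\pi-\mu\|^2 + O(t^2),
\]
so $\partial_t D|_{t=0} = (1/2)\|\hat{\mu}_\pi(x)-\mu\|^2$ pointwise in $x$. The same expansion applied to the spherical case $\mathcal{E}_1$ gives the identical leading coefficient since it is a special case $A_\pi = \alpha_\pi I_d$.

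Finally, for the uniform prior we have $\hat{\mu}_\mathrm{U}=x$, $h_\mathrm{U}(x)=0$, $H_\mathrm{U}(x)=0$, so the same expansion produces $\partial_t D|_{t=0} = (1/2)\|x-\mu\|^2$. Integrating both pointwise derivatives against $p(x;\mu, \tx^{-1} I_d)$ and subtracting yields the right-hand sides of \eqref{thm-e1} and \eqref{thm-e2}.

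The step I expect to be the main obstacle is justifying the interchange of $\partial/\partial t$, $\lim_{t\to 0}$, and the expectation over $x$, rather than the algebraic expansion itself. Concretely, one needs a dominated-convergence argument showing that the $O(t^2)$ remainders $R_\pi(t,x)$ in the pointwise expansion satisfy $|R_\pi(t,x)/t| \le g(x)$ for some $p(x;\mu, \tx^{-1} I_d)$-integrable $g$ uniformly for small $t$. For this, one uses the explicit forms of $h_\pi$, $H_\pi$, $\hat{\mu}_\pi$ together with the Gaussian tail of the sampling density, noting that each Taylor remainder is a rational expression in $(\alpha_\pi, A_\pi, \hat{\mu}_\pi-\mu)$ multiplied by a bounded factor $(1+tA_\pi)^{-1}$; minimal regularity assumptions on $\pi$ (so that $m_\pi$ is smooth and strictly positive with the relevant posterior moments finite) then give the required domination.
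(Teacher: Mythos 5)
Your proposal is correct and follows essentially the same route as the paper's proof: both rest on the closed-form Gaussian Kullback--Leibler divergence, the key structural fact that $t\hat{\Sigma}_{t,\pi}=I_d+tA_\pi$ (resp.\ $\hat{\parav}_{t,\pi}=t^{-1}+\tx^{-1}+\hpi$) with the correction term independent of $t$, and a Neumann/Taylor expansion in $t$ whose linear coefficient is $\tfrac12\|\hat{\mu}_\pi-\mu\|^2$; you merely expand the divergence before differentiating instead of differentiating first and then letting $t\to0$, and you treat $\mathcal{E}_1$ as the special case $A_\pi=\alpha_\pi I_d$ rather than separately. The interchange of $\partial/\partial t$, $\lim_{t\to0}$, and $\mathrm{E}_x$ that you flag as the main obstacle is likewise performed without justification in the paper (whose eigenvalue condition on $tA_\pi$ is also $x$-dependent), so your sketched domination argument is an addition rather than a deviation.
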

%
%
\begin{proof}

The risk difference between $p(y;\hat{\mu}_\pi, \hat{\parav}_{\ty,\pi}I_d)$
and $p_\mathrm{t,U}(y \mid x) = p(y;x,(\tx^{-1}+\ty^{-1})I_d)$ is given by
\begin{align}
\label{eq:riskdif}
R_t&(\mu; \hat{p}_{t,\mathrm{U}}) - R_t(\mu; \hat{\mu}_\pi, \hat{\parav}_{t,\pi})
= \mathrm{E}_{x,y \mid t}
\left[\log \frac{p(y;\hat{\mu}_\pi, \hat{\parav}_{t,\pi} I_d)}
{p(y;x,(\tx^{-1}+\ty^{-1})I_d)} \right] \nonumber \\
&= \mathrm{E}_{x,y \mid t}\left[
-\frac{\mdim}{2}\log\frac{\hat{\parav}_{t,\pi}}{\tx^{-1}+\ty^{-1}}
-\frac{1}{2 \hat{\parav}_{t,\pi}}(y-\hat{\mu}_\pi)^\top (y-\hat{\mu}_\pi)
+ \frac{1}{2(\tx^{-1}+\ty^{-1})}(y-x)^\top(y-x)
\right] \notag \\
&= \mathrm{E}_{x,y \mid t}\left[
-\frac{\mdim}{2}\log\frac{\hat{\parav}_{t,\pi}}{\tx^{-1}+\ty^{-1}}
-\frac{1}{2 \hat{\parav}_{t,\pi}}(y-\hat{\mu}_\pi)^\top (y-\hat{\mu}_\pi) \right]
+ \frac{d}{2},
\end{align}
where the expectation about $(x,y)$ is denoted as $\mathrm{E}_{x,y \mid t}[\,\cdot\,]$.
We evaluate the differential of the risk difference with respect to $\ty$.
From \eqref{hpi}, we have
\begin{align*}
\frac{\partial \hat{\parav}_{t,\pi}}{\partial \ty}
= \frac{\partial}{\partial \ty}\{\tx^{-1} + \ty^{-1} + \hpi(x)\}
= -\ty^{-2}.
\end{align*}
Thus, 
\begin{align}
\label{logvarratio}
\frac{\partial}{\partial \ty}\log\frac{\hat{\parav}_{t,\pi}}{\tx^{-1} + \ty^{-1}}
&= \frac{1}{\hat{\parav}_{t,\pi}}\frac{\partial \hat{\parav}_{t,\pi}}{\partial \ty}
- \frac{1}{\tx^{-1} + \ty^{-1}}\frac{\partial (\tx^{-1} + \ty^{-1})}{\partial \ty}
= -\ty^{-2}\left(
\frac{1}{\hat{\parav}_{t,\pi}} - \frac{1}{\tx^{-1} + \ty^{-1}}
\right).
\end{align}
We differentiate the rest of (\ref{eq:riskdif}) and  obtain
\begin{align}
\label{varratio}
\frac{\partial}{\partial \ty}&\mathrm{E}_{x,y \mid t}\left[
- \frac{1}{2 \hat{\parav}_{t,\pi}}(y-\hat{\mu}_\pi)^\top (y-\hat{\mu}_\pi)
\right]
= -\frac{1}{2}\frac{\partial}{\partial \ty}\mathrm{E}_{x}\left[
\frac{d \ty^{-1} + (\mu-\hat{\mu}_\pi)^\top (\mu-\hat{\mu}_\pi)}{\hat{\parav}_{t,\pi}}
\right] \notag \\
&= -\frac{1}{2}\mathrm{E}_{x}\left[
- d \ty^{-2} \frac{1}{\hat{\parav}_{t,\pi}}
+ d \ty^{-1} \frac{\ty^{-2}}{\hat{\parav}_{t,\pi}^{\,2}}
+ \frac{\ty^{-2}}{\hat{\parav}_{t,\pi}^{\,2}}(\mu-\hat{\mu}_\pi)^\top (\mu-\hat{\mu}_\pi)
\right].
\end{align}
From \eqref{eq:riskdif}, \eqref{logvarratio}, and \eqref{varratio}, we obtain
\begin{align*}
&\frac{\partial}{\partial \ty} \left\{ R_t(\mu; \hat{p}_{t,{\mathrm{U}}})
- R_t(\mu; \hat{\mu}_\pi, \hat{\parav}_{t,\pi}) \right\} \\
&= \frac{1}{2}\mathrm{E}_{x} \left[ \mdim \ty^{-2}\left(\frac{1}{\hat{\parav}_{t,\pi}} - \frac{1}{\tx^{-1} + \ty^{-1}} \right)
+ \mdim \ty^{-2} \frac{1}{\hat{\parav}_{t,\pi}}
- \mdim \ty^{-1} \frac{\ty^{-2}}{{\hat{\parav}_{t,\pi}}^{\,2}}
- \frac{\ty^{-2}}{{\hat{\parav}_{t,\pi}}^{\,2}}(\mu-\hat{\mu}_\pi)^\top (\mu-\hat{\mu}_\pi)
\right]\\
&=\frac{1}{2}\mathrm{E}_{x} \Biggl[
\mdim \ty^{-2} \left(
\frac{2}{\tx^{-1}+\ty^{-1}+h_\pi} - \frac{1}{\tx^{-1}+\ty^{-1}}
\right)
- \mdim \ty^{-1} \frac{\ty^{-2}}{(\tx^{-1}+\ty^{-1}+h_\pi)^2}
- \frac{\ty^{-2}}{(\tx^{-1}+\ty^{-1}+h_\pi)^2}(\mu-\hat{\mu}_\pi)^\top (\mu-\hat{\mu}_\pi) \Biggr] \\
&=
\frac{1}{2}\mathrm{E}_{x} \Biggl[
{\mdim}\ty^{-1} \left\{
2-\frac{2(\tx^{-1}+h_\pi)}{\tx^{-1}+\ty^{-1}+h_\pi} - 1 + \frac{\tx^{-1}}{\tx^{-1}+\ty^{-1}}
\right\}
- \mdim \ty^{-1}\left\{1-\frac{\ty^2(\tx^{-1}+h_\pi)^2
+ 2\ty(\tx^{-1}+h_\pi)}{(1 + \ty \tx^{-1} + \ty h_\pi)^2} \right\}\\
&~~~~~~~~~~~~ - 
\frac{\ty^{-2}}{(\tx^{-1}+\ty^{-1}+h_\pi)^2}(\mu-\hat{\mu}_\pi)^\top (\mu-\hat{\mu}_\pi)
\Biggr] \\
&=
\frac{1}{2}\mathrm{E}_{x} \Biggl[
\mdim \left\{
-\frac{2(\tx^{-1}+h_\pi)}
{1 + \ty \tx^{-1} + \ty h_\pi} + \frac{\tx^{-1}}{1 + \ty \tx^{-1}}
+ \frac{\ty(\tx^{-1}+h_\pi)^2
+ 2 (\tx^{-1}+h_\pi)}{(1 + \ty \tx^{-1} + \ty h_\pi)^2} \right\}
- \frac{1}{(1 + \ty \tx^{-1} + \ty h_\pi)^2}(\mu-\hat{\mu}_\pi)^\top (\mu-\hat{\mu}_\pi)
\Biggr] \\
&=
\frac{1}{2}\mathrm{E}_{x} \Biggl[
\mdim \left\{ \tx^{-1} - \frac{\ty \tx^{-2}}{1 + \ty \tx^{-1}}
- \frac{\ty(\tx^{-1}+h_\pi)^2}{(1 + \ty \tx^{-1} + \ty h_\pi)^2}
\right\}
- \frac{1}{(1 + \ty \tx^{-1} + \ty h_\pi)^2}(\mu-\hat{\mu}_\pi)^\top (\mu-\hat{\mu}_\pi)
\Biggr].
\end{align*}
Thus, from
\begin{align*}
&\lim_{\ty \to 0} \frac{\partial}{\partial \ty} \left\{ R_t(\mu; \hat{p}_{t,{\mathrm{U}}})
- R_t(\mu; \hat{\mu}_\pi, \hat{\parav}_{t,\pi}) \right\}
= \frac{\mdim \tx^{-1} - \mathrm{E}_{x}[\|\hat{\mu}_\pi-\mu\|^2]}{2},
\end{align*}
the desired result \eqref{thm-e1} is obtained.

Next, the risk difference between the extended plug-in density $p(y;\hat{\mu}_\pi, \hat{\Sigma}_{t,\pi})$
and $p_\mathrm{U}(y \mid x)$ is
\begin{align}
&R_t(\mu; p_\mathrm{U}) - R_t(\mu; \hat{\mu}_\pi, \hat{\Sigma}_{t,\pi})
= \mathrm{E}_{x,y \mid t}
\left[\log \frac{p(y;\hat{\mu}_\pi, \hat{\Sigma}_{t,\pi})}
{p(y;x,(\tx^{-1}+\ty^{-1})I_d)}
\right] \nonumber \\
&= \mathrm{E}_{x,y \mid t}\left[
-\frac{1}{2}\log\frac{|\hat{\Sigma}_{t,\pi}|}{(\tx^{-1} +\ty^{-1})^d}
- \frac{1}{2}(y-\hat{\mu}_\pi)^\top\hat{\Sigma}_{t,\pi}^{-1} (y-\hat{\mu}_\pi)
+ \frac{1}{2(\tx^{-1} +\ty^{-1})}(y-x)^\top(y-x)
\right] \notag \\
&= \mathrm{E}_{x,y \mid t}\left[
- \frac{1}{2}\log\frac{|\hat{\Sigma}_{t,\pi}|}{(\tx^{-1} +\ty^{-1})^d}
- \frac{1}{2}(y-\hat{\mu}_\pi)^\top\hat{\Sigma}_{t,\pi}^{-1} (y-\hat{\mu}_\pi)
\right]
+ \frac{\mdim}{2}.
\end{align}
From \eqref{Sigmapi},
\begin{align}
\frac{\partial}{\partial \ty} \hat{\Sigma}_{t,\pi}
&= \frac{\partial}{\partial t} \bigl\{(\tx^{-1} + \ty^{-1}) I_d + H_\pi(x) \bigr\}
= - \ty^{-2} I_d.
\end{align}
Thus,
\begin{align*}
\frac{\partial}{\partial \ty} \log|\hat{\Sigma}_{t,\pi}|
= \mathrm{tr} \Bigl\{ \hat{\Sigma}_{t,\pi}^{-1}(-\ty^{-2})I_d \Bigr\}
= -\ty^{-2}\mathrm{tr} \hat{\Sigma}_{t,\pi}^{-1}
\end{align*}
and
\begin{align*}
\frac{\partial}{\partial \ty} & \mathrm{E}_{x,y \mid t}
[ (y-\hat{\mu}_\pi)^\top\hat{\Sigma}_{t,\pi}^{-1} (y-\hat{\mu}_\pi)]
=\frac{\partial}{\partial \ty} \mathrm{E}_{x}[
\ty^{-1}\mathrm{tr}(\hat{\Sigma}_{t,\pi}^{-1}) + (\mu-\hat{\mu}_\pi)^\top\hat{\Sigma}_{t,\pi}^{-1} (\mu-\hat{\mu}_\pi)
]\\
&= -\ty^{-2}\mathrm{tr} \, \hat{\Sigma}_{t,\pi}^{-1}
+ \ty^{-1}\mathrm{tr}(\ty^{-2}\hat{\Sigma}_{t,\pi}^{-2}) + \ty^{-2}(\mu-\hat{\mu}_\pi)^\top\hat{\Sigma}_{t,\pi}^{-2} (\mu-\hat{\mu}_\pi).
\end{align*}
Therefore, we obtain
\begin{align*}
&\frac{\partial}{\partial \ty}\mathrm{E}_{x,y \mid t}
\left[\log \frac{p(y;\hat{\mu}_\pi, \hat{\Sigma}_{t,\pi})}
{p(y;x,(\tx^{-1}+\ty^{-1})I_d)}
\right] \nonumber \\
&=\frac{\partial}{\partial \ty} \mathrm{E}_{x,y \mid t}
\left[ -\frac{1}{2}\log\frac{|\hat{\Sigma}_{t,\pi}|}{(\tx^{-1}+\ty^{-1})^d}
-\frac{1}{2}(y-\hat{\mu}_\pi)^\top\hat{\Sigma}_{t,\pi}^{-1} (y-\hat{\mu}_\pi)
\right]\\
&=  \mathrm{E}_{x} \Biggl[
-\frac{1}{2} \Bigl( -\ty^{-2}\mathrm{tr} \, \hat{\Sigma}_{t,\pi}^{-1}
- \mdim \frac{-\ty^{-2}}{\tx^{-1}+\ty^{-1}} \Bigr)
- \frac{1}{2}
\left\{ -\ty^{-2}\mathrm{tr} \, \hat{\Sigma}_{t,\pi}^{-1}
+ \ty^{-1} \mathrm{tr} (\ty^{-2}\hat{\Sigma}_{t,\pi}^{-2}) + \ty^{-2}(\mu-\hat{\mu}_\pi)^\top\hat{\Sigma}_{t,\pi}^{-2} (\mu-\hat{\mu}_\pi) \right\}
\Biggr] \\
&=  \mathrm{E}_{x} \Biggl[
\frac{\mdim \ty^{-1}}{2} \Bigl( \frac{\tx^{-1}}{\tx^{-1} +\ty^{-1}}-1 \Bigr)
+ \ty^{-1} \mathrm{tr}(\ty^{-1}\hat{\Sigma}_{t,\pi}^{-1})
- \frac{\ty^{-1}}{2} \mathrm{tr}(\ty^{-2}\hat{\Sigma}_{t,\pi}^{-2})
- \frac{1}{2}\ty^{-2}(\mu-\hat{\mu}_\pi)^\top\hat{\Sigma}_{t,\pi}^{-2} (\mu-\hat{\mu}_\pi)
\Biggr].
\end{align*}
Let
\[
A_\pi := {\tx}^{-1} I_d + {\tx^{-2}}\left(\frac{\Delta^2 m_\pi}{m_\pi} - \frac{\Delta m_\pi\Delta m_\pi^\top}{m_\pi^2} \right).
\]
Then,
\[
\ty \hat{\Sigma}_{t,\pi} = I_d +\ty A_\pi
\]
and
\[
\ty^{-1} \hat{\Sigma}_{t,\pi}^{-1} = (I_d +\ty A_\pi)^{-1}.
\]
When $\ty$ is small enough, all absolute values of the eigenvalues of $\ty A_\pi$ are smaller than 1 and
\[
\ty^{-1} \hat{\Sigma}_{t,\pi}^{-1} = \sum_{i=0}^\infty (-1)^i(\ty A_\pi)^{i}.
\]
In the same manner, let
\[
B_\pi := 2A_\pi +\ty A_\pi^2,
\]
and we have
\[
(\ty \hat{\Sigma}_{t,\pi})^2 = I_d + \ty B_\pi,
\]
and when $\ty$ is small enough that all absolute values of the eigenvalues of $\ty B_\pi$ are smaller than 1,
\[
\ty^{-2} \hat{\Sigma}_{t,\pi}^{-2} = (I_d + \ty B_\pi)^{-1} = \sum_{i=0}^\infty (-1)^i(\ty B_\pi)^{i}.
\]
Therefore, when $\ty>0$ is small enough,
\begin{align*}
&\frac{\partial}{\partial \ty}\mathrm{E}_{x,y \mid t}
\biggl[\log \frac{p(y;\hat{\mu}_\pi, \hat{\Sigma}_{t,\pi})}
{p(y;x,(\tx^{-1}+\ty^{-1})I_d)} \biggr] \notag \\
&= \mathrm{E}_{x} \Biggl[
\frac{d \ty^{-1}}{2} \Bigl(\frac{\tx^{-1}}{\tx^{-1} +\ty^{-1}}-1 \Bigr)
+ \ty^{-1} \mathrm{tr}(\ty^{-1}\hat{\Sigma}_{t,\pi}^{-1})
- \frac{\ty^{-1}}{2} \mathrm{tr}(\ty^{-2}\hat{\Sigma}_{t,\pi}^{-2})
- \frac{1}{2}\ty^{-2}(\mu-\hat{\mu}_\pi)^\top\hat{\Sigma}_{t,\pi}^{-2} (\mu-\hat{\mu}_\pi)
\Biggr] \\
&= \frac{\mdim \ty^{-1}}{2} \frac{\tx^{-1}}{\tx^{-1}+\ty^{-1}}
+ \mathrm{E}_{x} \Biggl[
-\frac{d \ty^{-1}}{2}
+ \ty^{-1} \mathrm{tr} \Bigl\{ \sum_{i=0}^\infty(-1)^i (\ty A_\pi)^i \Bigr\}
- \frac{\ty^{-1}}{2} \mathrm{tr} \Bigl\{ \sum_{j=0}^\infty (-1)^j (\ty B_\pi)^j \Bigr\} \\
&~~~~
- \frac{1}{2}(\mu-\hat{\mu}_\pi)^\top \Bigl\{ \sum_{j=0}^\infty (-1)^j (\ty B_\pi)^j \Bigr\} (\mu-\hat{\mu}_\pi)
\Biggr] \\
&= \frac{\mdim}{2} \frac{\tx^{-1}}{1 + \ty \tx^{-1}}
+ \mathrm{E}_{x} \Biggl[
-\frac{\mdim \ty^{-1}}{2}
+ \mathrm{tr} \Bigl\{\ty^{-1} I_d - A_\pi + \sum_{i=2}^\infty(-1)^i \ty^{i-1}A_\pi^i \Bigr\}
- \frac{1}{2}\mathrm{tr}
\Bigl\{\ty^{-1} I_d - 2A_\pi -\ty A_\pi^2 + \sum_{j=2}^\infty (-1)^j \ty^{j-1} B_\pi^j \Bigr\} \\
&~~~~ - \frac{1}{2}(\mu-\hat{\mu}_\pi)^\top
\Bigl\{ I_d + \sum_{j=1}^\infty (-1)^j (\ty B_\pi)^j \Bigr\} (\mu-\hat{\mu}_\pi)
\Biggr].
\end{align*}
Thus, from
\begin{align*}
&\lim_{t \to 0} \frac{\partial}{\partial \ty}\mathrm{E}_{x,y \mid t}
\biggl[\log \frac{p(y;\hat{\mu}_\pi, \hat{\Sigma}_{t,\pi})}
{p(y;x,(\tx^{-1}+\ty^{-1})I_d)} \biggr]
= \frac{\mdim \tx^{-1} - \mathrm{E}_{x}[\|\hat{\mu}_\pi-\mu\|^2]}{2},
\end{align*}
the desired result \eqref{thm-e2} is obtained.
\end{proof}
%

%
%
\section{Numerical experiments}
We compare the Kullback--Leibler risks of the extended plug-in densities based on Stein's prior $\pi_\mathrm{S}$,
the Bayesian predictive density $p_\mathrm{U}(y \mid x)$ based on the uniform prior $\pi_\mathrm{U}$,
the Bayesian predictive density $p_\mathrm{S}(y \mid x)$ based on $\pi_\mathrm{S}$,
and an empirical Bayes method studied in \citet{xu2011}.
Observation $x$ is distributed according to
$\mathrm{N}_d(\mu, \vx I_d)$ with $\mu\in\mathbb{R}^d$ and $\vx > 0$,
and a future sample $y$ comes from a normal distribution $\mathrm{N}_d(\mu, \vy I_d)$
with the same mean $\mu$ and with a possibly different variance $\vy > 0$.
In Theorem \ref{thm:risk}, we observe that the proposed methods based on a superharmonic prior dominate
$p_\mathrm{U}(y \mid x)$ when $1/\vy$ is close to 0.
In this experiments, we numerically evaluate the Kullback--Leibler risks for finite $\vy > 0$. 
Although we are interested in the risk comparison among predictive densities that can be obtained by simple computations,
we also simulate the Kullback--Leibler risk of the Bayesian predictive density $p_\mathrm{S}(y \mid x)$
based on $\pi_\mathrm{S}$
to verify the approximate performance of those plug-in densities.

When Stein's prior is employed, the extended estimators $\hat{\mu}_\pi$, $\hat{\parav}_\pi$ and $\hat{\Sigma}_{\pi}$ 
are given by
\begin{align*}
\hat{\mu}_\pi &= F_1 x,\\
\hat{\xi}_\pi &= v + F_1 u + \frac{x^\top x}{d}(F_2 - F_1^2),\\
\hat{\Sigma}_\pi &= vI_d + F_1 uI_d + {x^\top x}(F_2 - F_1^2)
\end{align*}
where
\begin{align*}
    F_1 &= 1 - 2\frac{\phi_{d+2}(\|x\|/\sqrt{u})}{\phi_{d}(\|x\|/\sqrt{u})},\\
    F_2 &= 1 + 4\frac{\phi_{d+4}(\|x\|/\sqrt{u})-\phi_{d+2}(\|x\|/\sqrt{u})}{\phi_{d}(\|x\|/\sqrt{u})}
\end{align*}%
and
\[
\phi_d(a) = a^{-d+2}\int^{a^2/2}_{0} s^{d/2-2}\exp(-s) ds~~~(a\geq 0).
\]
These evaluations of the extended estimators follow the mixture representation \eqref{integral representation}
of Stein's prior.
For comparison, we employed the empirical Bayes method $\hat{p}_{p-3}$
from the numerical analysis in \citet{xu2011}.
The Kullback--Leibler risks are computed by taking the average of 5000 trials.

The simulation results are shown in Figure \ref{fig_KLrisk}.
As expected, the risk of $p_\mathrm{S}(y \mid x)$ is the smallest, whereas the risk of $p_\mathrm{U}(y \mid x)$,
which is the only method in this experiment that does not employ a shrinkage prior, is the largest.
The risk of $p_\mathrm{U}(y \mid x)$ is much larger than that of any other methods in Figure \ref{p100}.
The four competitors that approximate $p_\mathrm{S}(y \mid x)$ are
the two extended plug-in densities, the empirical Bayes predictive density, and $p_\mathrm{U}(y \mid x)$.
Among these,
the extended plug-in density $p(y;\hat{\mu}_\pi, \hat{\Sigma}_{\pi})$ exhibits the best performance 
unless $\|\mu\|$ is very close to $0$.
The risk performance of the proposed extended plug-in densities approaches that of $p_\mathrm{S}(y \mid x)$
more rapidly than the empirical Bayes as $\|\mu\|$ increases.

Figure \ref{fig:three} presents the effect of the choice of the  extended models
by showing the risk differences of $p(y;\hat{\mu}_\pi, \hat{\parav}_\pi)$, $p(y;\hat{\mu}_\pi$,
$\hat{\Sigma}_{\pi})$, and $\hat{p}_\mathrm{S}(y \mid x)$.
The extended spaces to which extended plug-in densities $p(y;\hat{\mu}_\pi, \hat{\parav}_\pi)$ and $p(y;\hat{\mu}_\pi, \hat{\Sigma}_{t,\pi})$
belong are $\mathrm{N}_d(\mu, \parav I_d)$ and $\mathrm{N}_d(\mu, \Sigma)$, respectively,
and their dimensions are $\mdim+1$ and $\mdim + \mdim(\mdim+1)/2 = \mdim^2/2 + (3/2)\mdim$, respectively.
The Bayesian predictive density $p_\mathrm{S}(y \mid x)$ does not belong to any of the finite-dimensional models.
The risk comparison demonstrates that $p(y;\hat{\mu}_\pi, \hat{\Sigma}_{\pi})$ performs slightly better than
$p(y;\hat{\mu}_\pi, \hat{\parav}_\pi I_d)$, which suggests that a larger extended model result in a better performance.

\begin{figure}[H]
\begin{subfigure}{0.48\columnwidth}
\centering
\includegraphics[width=\columnwidth]{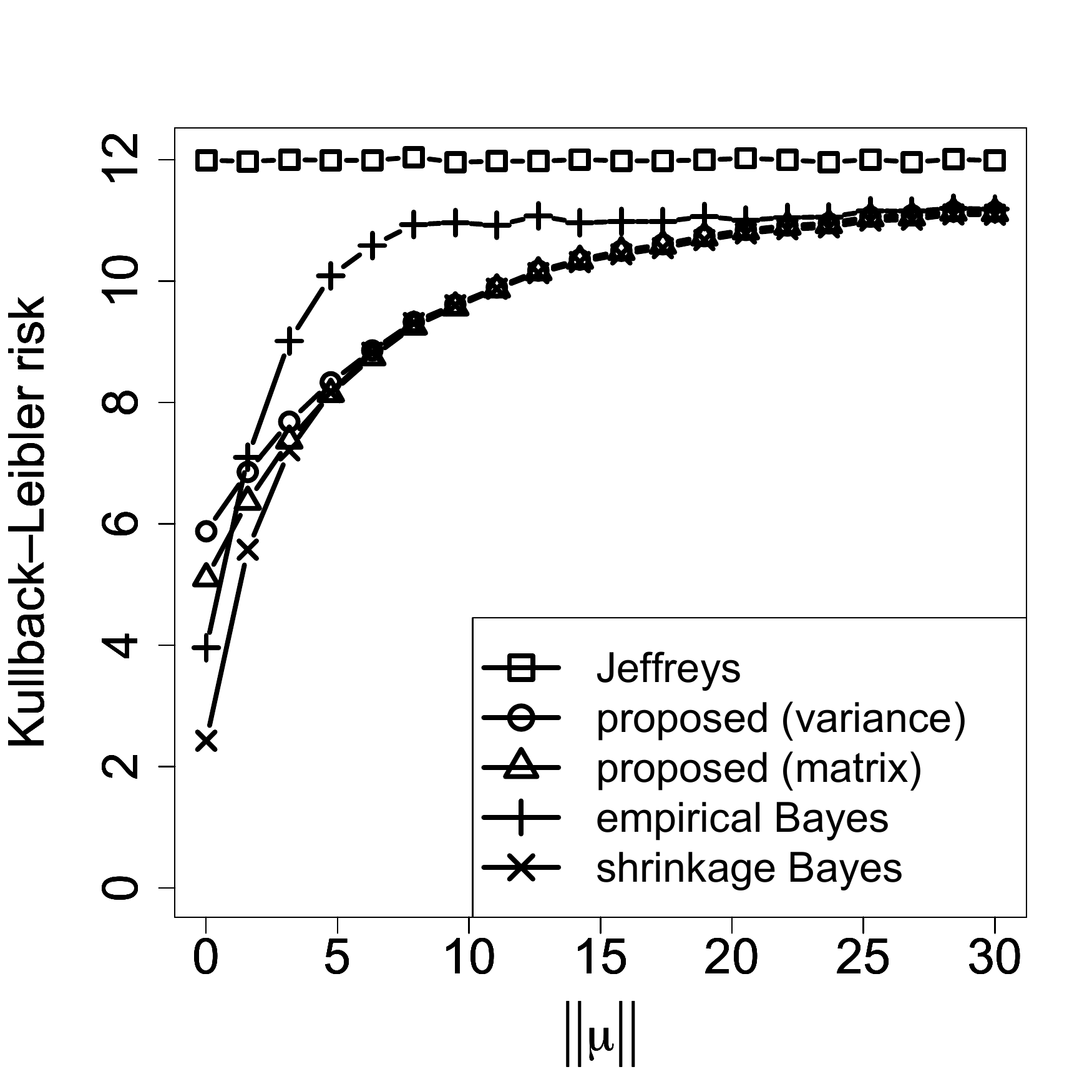}
\caption{$\mdim=10, \vx=1, \vy=1/10$}
\label{p10}
\end{subfigure}
 \ 
\centering
\begin{subfigure}{0.48\columnwidth}
\includegraphics[width=\columnwidth]{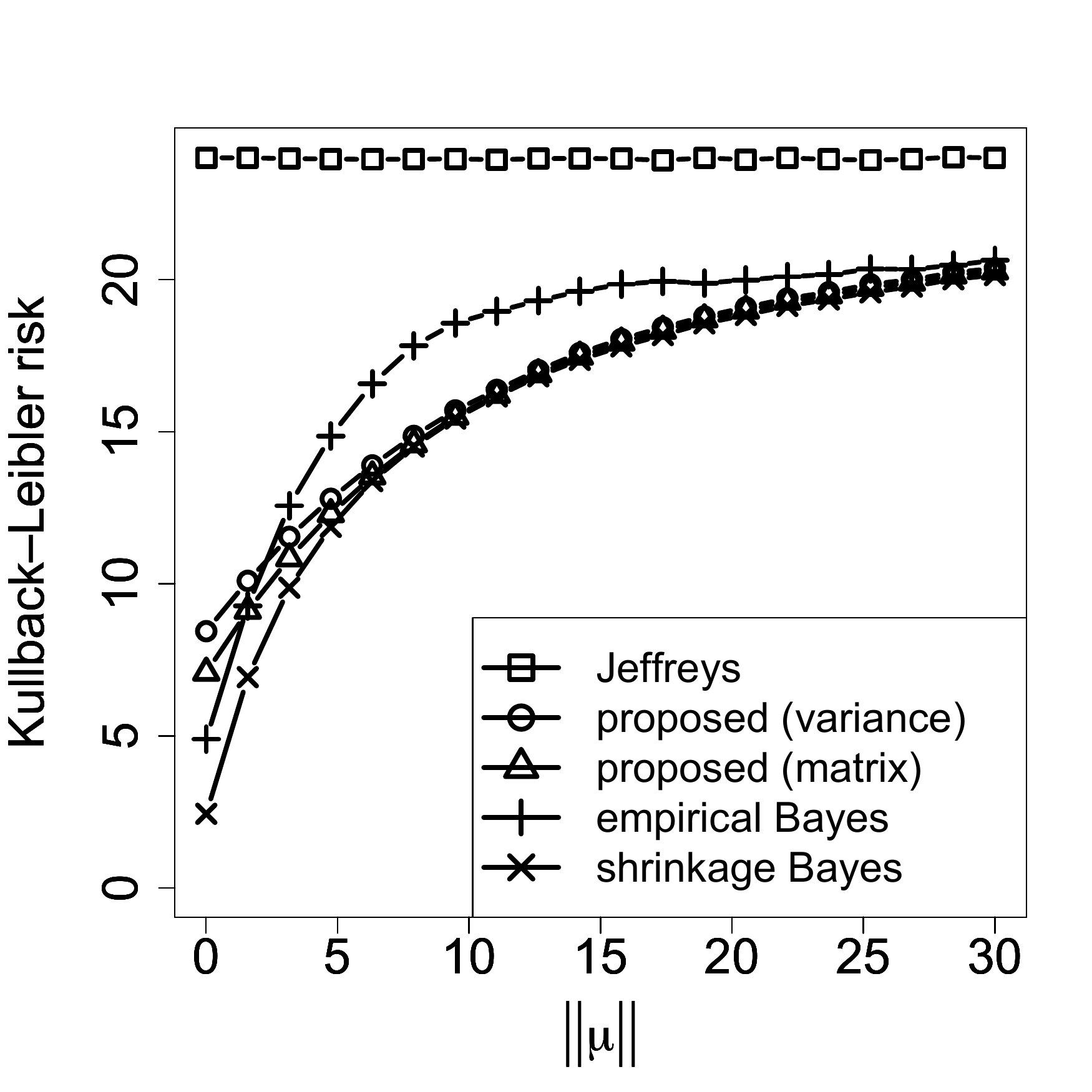}
\caption{$\mdim=20, \vx=1, \vy=1/10$}
\label{p20}
\end{subfigure}
\\
\centering
\begin{subfigure}{0.48\columnwidth}
\includegraphics[width=\columnwidth]{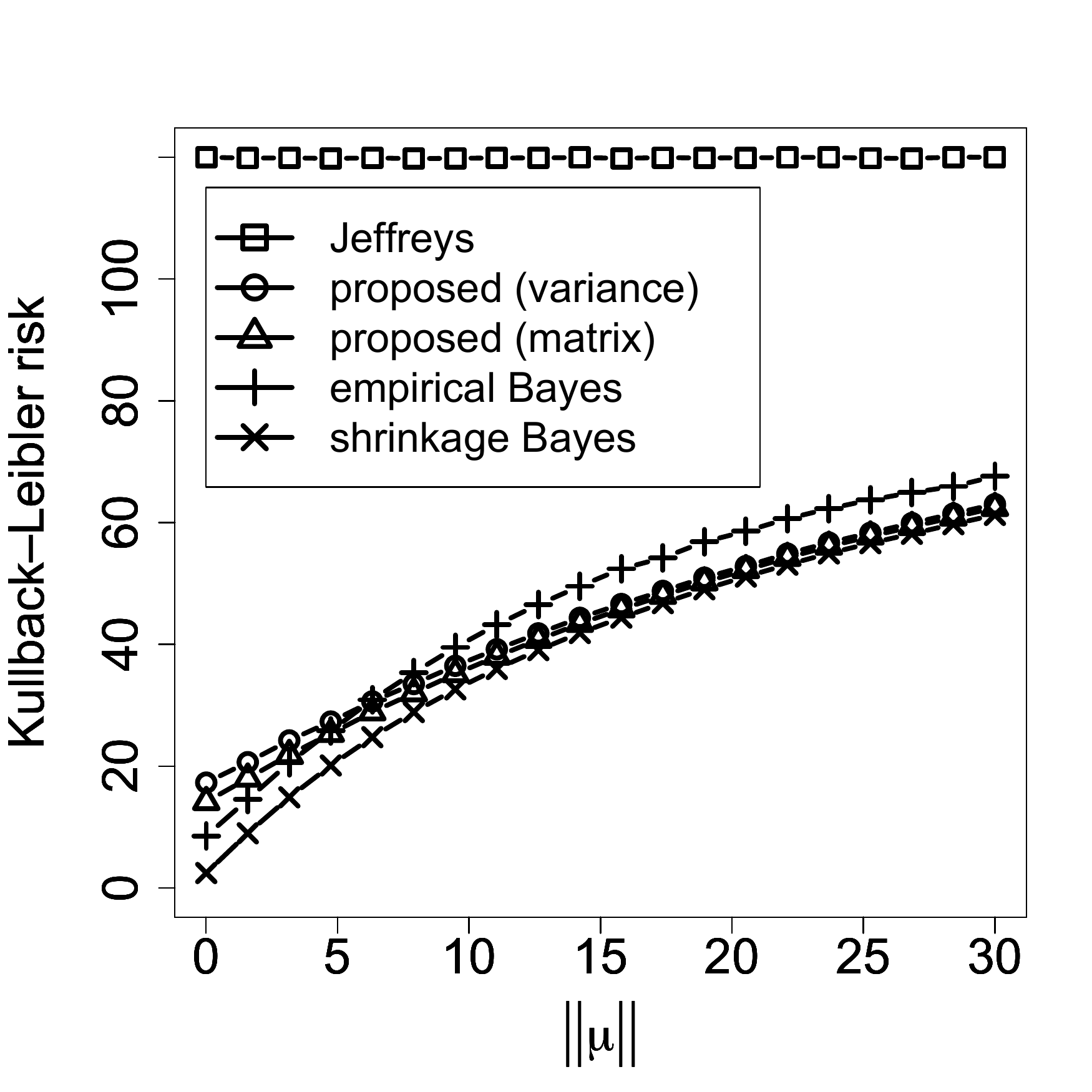}
\caption{$\mdim=100, \vx=1, \vy=1/10$}
\label{p100}
\end{subfigure}
 \ 
\centering
\begin{subfigure}{0.48\columnwidth}
\includegraphics[width=\columnwidth]{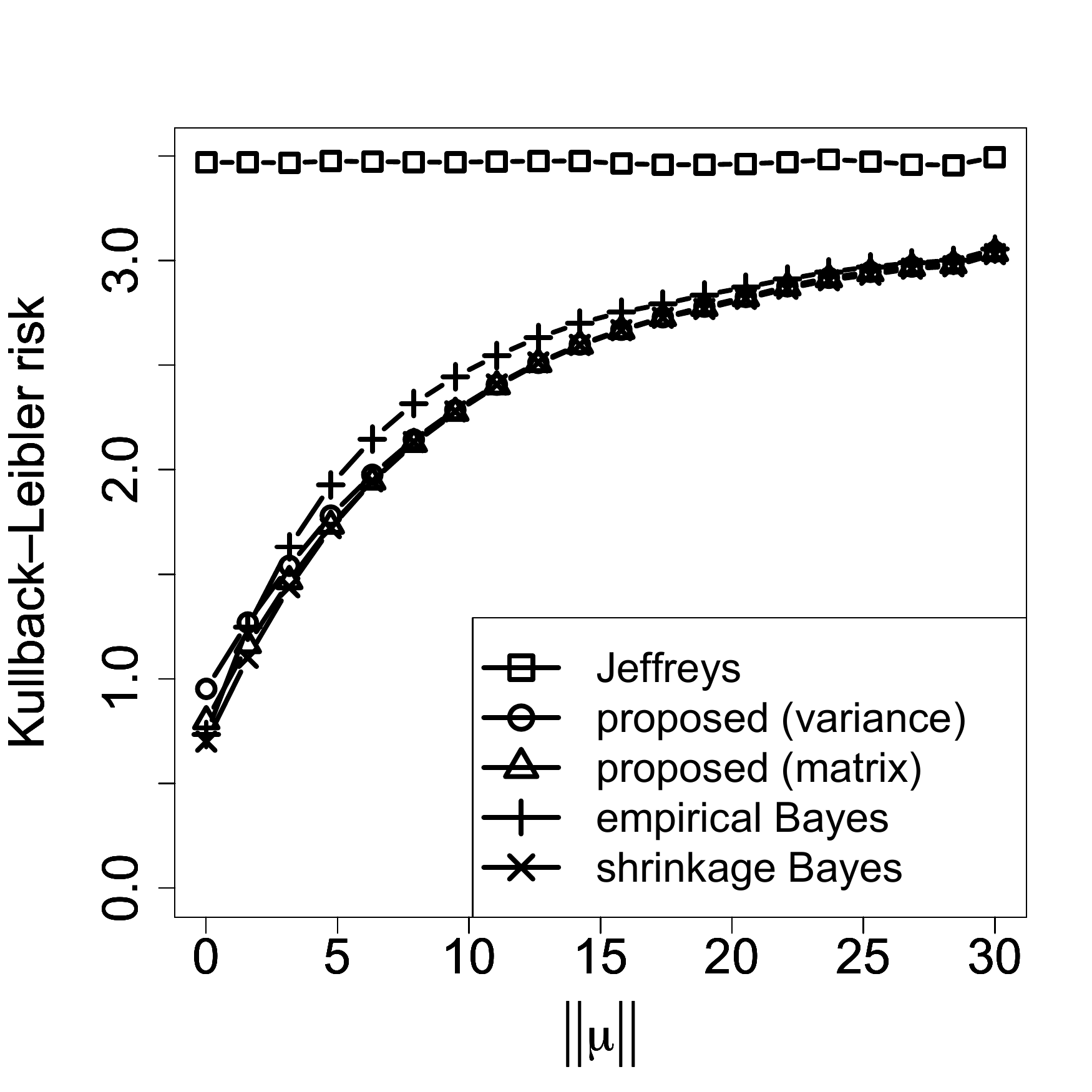}
\caption{$\mdim=10, \vx=1, \vy=1$}
\label{p10_ny1}
\end{subfigure}
\caption{
Kullback--Leibler risks of extended plugin densities with $(\hat{\mu}_\pi, \hat{\xi}_\pi)$ and $(\hat{\mu}_\pi, \hat{\Sigma}_\pi)$, empirical Bayes method in \cite{xu2011}, and Bayesian predictive densities $p_\mathrm{U}$ and $p_\mathrm{S}$
}
\label{fig_KLrisk}
\end{figure}
\begin{figure}[h]
\centering
\includegraphics[width=0.5\columnwidth]{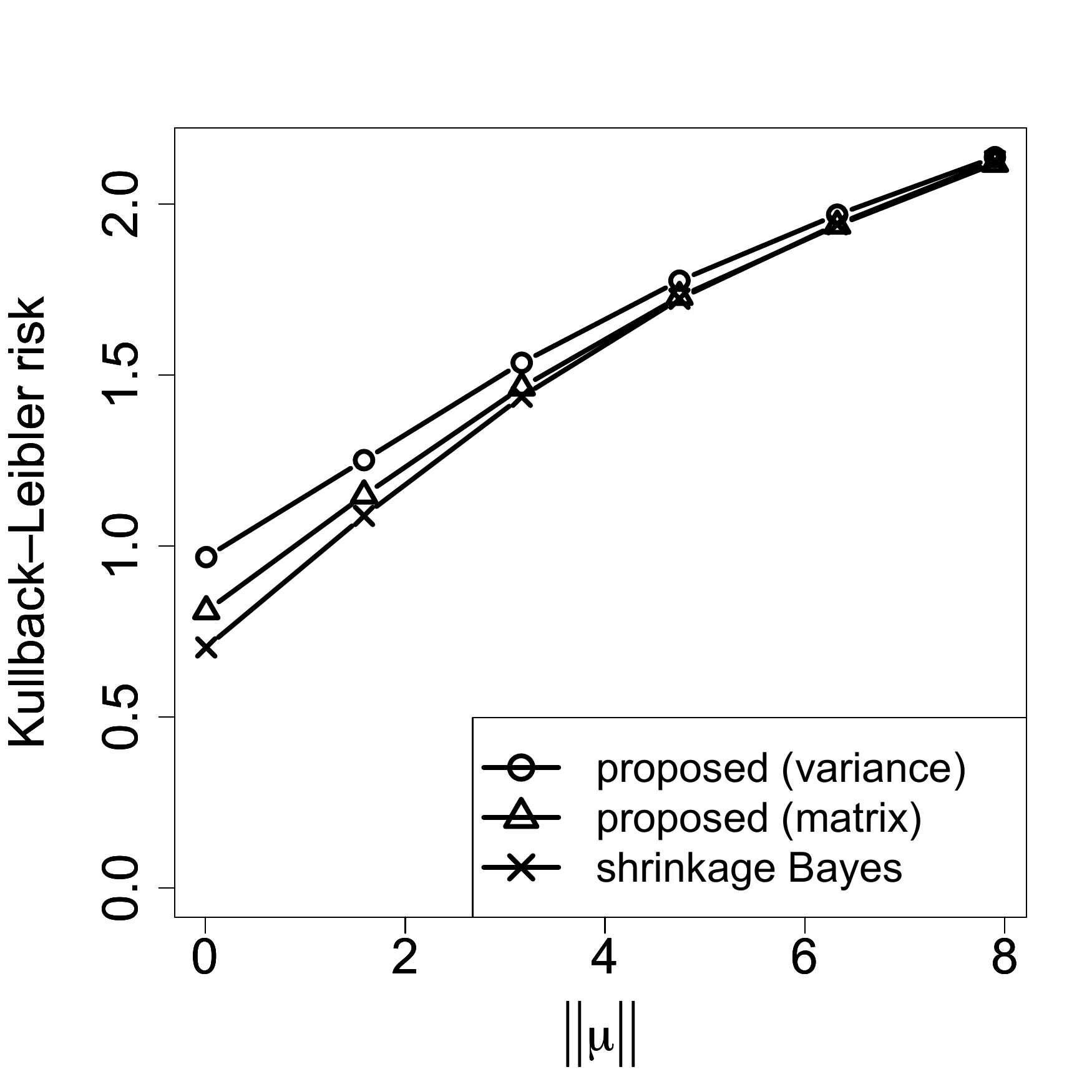}
\caption{Kullback--Leibler risks of extended plugin densities with $(\hat{\mu}_\pi, \hat{\xi}_\pi)$, $(\hat{\mu}_\pi, \hat{\Sigma}_\pi)$ and a Bayesian predictive density $p_\mathrm{S}$ when $\mdim=10, \vx=1, \vy=1$}
\label{fig:three}
\end{figure}
%

\section*{Acknowledgments}
This work was supported in part by JSPS KAKENHI Grant Numbers JP20K23316
and JP22H00510.

\bibliographystyle{biometrika}
\bibliography{eplugin_normal}

\newcommand{\noop}[1]{}
\begin{thebibliography}{11}
\expandafter\ifx\csname natexlab\endcsname\relax\def\natexlab#1{#1}\fi

\bibitem[{Brown(1971)}]{brown1971}
\textsc{Brown, L.~D.} (1971).
\newblock Admissible estimators, recurrent diffusions, and insoluble boundary
  value problems.
\newblock \textit{The Annals of Mathematical Statistics} \textbf{42}, 855--903.

\bibitem[{Brown et~al.(2008)Brown, George \& X.}]{brown2008}
\textsc{Brown, L.~D.}, \textsc{George, E.~I.} \& \textsc{X., X.} (2008).
\newblock Admissible predictive density estimation.
\newblock \textit{The Annals of Statistics} \textbf{36}, 1156--1170.

\bibitem[{George et~al.(2006)George, Liang \& Xu}]{george2006}
\textsc{George, E.~I.}, \textsc{Liang, F.} \& \textsc{Xu, X.} (2006).
\newblock Improved minimax predictive densities under
  {K}ullback^^e2^^80^^93{L}eibler loss.
\newblock \textit{{Annals of Statistics}} \textbf{34}, 78--91.

\bibitem[{George \& Xu(2008)}]{george2008}
\textsc{George, E.~I.} \& \textsc{Xu, X.} (2008).
\newblock Predictive density estimation for multiple regression.
\newblock \textit{Econometric Theory} \textbf{24}, 528--544.

\bibitem[{John(1978)}]{john1978}
\textsc{John, F.} (1978).
\newblock \textit{Partial Differential Equations}.
\newblock New York: Springer, 3rd ed.

\bibitem[{Kobayashi \& Komaki(2008)}]{kobayashi2008}
\textsc{Kobayashi, K.} \& \textsc{Komaki, F.} (2008).
\newblock Bayesian shrinkage prediction for the regression problem.
\newblock \textit{Journal of multivariate analysis} \textbf{99}, 1888--1905.

\bibitem[{Komaki(2001)}]{komaki2001}
\textsc{Komaki, F.} (2001).
\newblock A shrinkage predictive distribution for multivariate normal
  observables.
\newblock \textit{Biometrika} \textbf{88}, 859--864.

\bibitem[{Matsuda \& Komaki(2015)}]{matsuda2015}
\textsc{Matsuda, T.} \& \textsc{Komaki, F.} (2015).
\newblock Singular value shrinkage priors for bayesian prediction.
\newblock \textit{Biometrika} \textbf{102}, 843--854.

\bibitem[{Okudo \& Komaki(2021)}]{okudo2021}
\textsc{Okudo, M.} \& \textsc{Komaki, F.} (2021).
\newblock Bayes extended estimators for curved exponential families.
\newblock \textit{IEEE Transactions on Information Theory} \textbf{67},
  1088--1098.

\bibitem[{Stein(1974)}]{stein1974}
\textsc{Stein, C.} (1974).
\newblock Estimation of the mean of a multivariate normal distribution.
\newblock \textit{Proceedings of the Prague Symposium on Asymptotic Statistics}
  , 345--381.

\bibitem[{Xu \& Zhou(2011)}]{xu2011}
\textsc{Xu, X.} \& \textsc{Zhou, D.} (2011).
\newblock Empirical {B}ayes predictive densities for high-dimensional normal
  models.
\newblock \textit{{Journal of Multivariate Analysis}} \textbf{102}, 1417--1428.

\end{thebibliography}

\end{document}